\newcommand{\cprime}{\/{\mathsurround=0pt$'$}}
\newcommand*{\taup}[1]{{#1}^{\mathbf{P}}}
\newcommand*{\taui}[1]{{#1}^{\mathbf{I}}}
\let \phi = \varphi
\DeclareMathOperator{\const}{const}
\let\mathcal\mathscr
\theoremstyle{theorem}
\newtheorem{theorem}{Theorem}
\newtheorem{proposition}{Proposition}
\theoremstyle{definition}
\newtheorem*{coord}{Coordinates}
\newtheorem{example}{Example}
\theoremstyle{remark}
\newtheorem{remark}{Remark}
\DeclareMathOperator{\rank}{rank}
\newcommand*{\sd}[2]{\{\,#1\mid#2\,\}}
\newcommand*{\eval}[1]{\left.#1\right|}
\newcommand*{\abs}[1]{\left|#1\right|}
\newcommand*{\Ev}{\mathbf{E}}
\date{\today}
\title[Nonlocal conservation laws]{Nonlocal conservation laws of PDEs possessing
  differential coverings}
\author{I.~Krasil{\cprime}shchik}\address{V.A.~Trapeznikov Institute of
  Control Sciences RAS, Profsoyuznaya 65, 117342 Moscow,
  Russia}\email{josephkra@gmail.com}\thanks{The work was partially
  supported by the RFBR Grant 18-29-10013 and IUM-Simons Foundation.} 
\begin{document}

\begin{abstract}
  In his 1892 paper~\cite{Bianchi}, L.~Bianchi noticed, among other
  things, that quite simple transformations of the formulas that describe the
  B\"{a}cklund transformation of the sine-Gordon equation lead to what is
  called a nonlocal conservation law in modern language. Using the techniques
  of differential coverings~\cite{Trends}, we show that this observation is
  of a quite general nature. We describe the procedures to construct such
  conservation laws and present a number of illustrative examples.
\end{abstract}

\subjclass[2010]{37K10}

\keywords{Nonlocal conservation laws, differential coverings}

\dedicatory{To the memory of Alexandre Vinogradov, my teacher}
\maketitle
\tableofcontents

\section*{Introduction}

In~\cite{Bianchi}, L.~Bianchi, dealing with the celebrated B\"{a}cklund
auto-transformation\footnote{I changed the original notation slightly}
\begin{equation}\label{eq:1}
  \dfrac{\partial(u - w)}{\partial x} = \sin(u +
  w),\quad
  \dfrac{\partial(u + w)}{\partial y} = \sin(u - w)
\end{equation}
for the sine-Gordon equation
\begin{equation}\label{eq:6}
  \frac{\partial^2(2u)}{\partial x\partial y} = \sin (2u)
\end{equation}
in the course of intermediate computations (see~\cite[p.~10]{Bianchi}) notices
that the function
\begin{equation*}
  \psi = \ln \frac{\partial u}{\partial C},
\end{equation*}
where $C$ is an arbitrary constant on which the solution~$u$ may depend,
enjoys the relations
\begin{equation*}
  \frac{\partial\psi}{\partial x} = \cos(u + w),\quad
  \frac{\partial\psi}{\partial y} = \cos(u - w).
\end{equation*}
Reformulated in modern language, this means that the $1$-form
\begin{equation*}
  \omega = \cos(u + v)\,dx + \cos(u - v)\,dy
\end{equation*}
is a nonlocal conservation law for Eq.~\eqref{eq:1}.

Actually, Bianchi's observation is of a very general nature and this is shown
below.

In Sec.~\ref{sec:preliminaries}, I shortly introduce the basic constructions
in nonlocal geometry of PDEs, i.e., the theory of differential
coverings,~\cite{Trends}. Sec.~\ref{sec:main-result} contains an
interpretation of the result by L.~Bianchi in the most general setting. In
Sec.~\ref{sec:examples}, a number of examples is discussed.

Everywhere below we use the notation~$\mathcal{F}(\cdot)$ for the
$\mathbb{R}$-algebra of smooth functions, $D(\cdot)$ for the Lie algebra of
vector fields, and $\Lambda^*(\cdot) = \oplus_{k\geq0}\Lambda^k(\cdot)$ for
the exterior algebra of differential forms.

\section{Preliminaries}\label{sec:preliminaries}

Following~\cite{AMS-book}, we deal with infinite
prolongations~$\mathcal{E} \subset J^\infty(\pi)$ of smooth submanifolds
in~$J^k(\pi)$, where~$\pi\colon E\to M$ is a smooth locally trivial vector
bundle over a smooth manifold~$M$, $\dim M = n$, $\rank\pi = m$.
These~$\mathcal{E}$ are differential equations for us. Solutions
of~$\mathcal{E}$ are graphs of infinite jets that lie in~$\mathcal{E}$. In
particular, $\mathcal{E} = J^\infty(\pi)$ is the tautological equation~$0=0$.

The bundle $\pi_\infty\colon \mathcal{E} \to M$ is endowed with a natural flat
connection $\mathcal{C}\colon D(M) \to D(\mathcal{E})$ called the Cartan
connection. Flatness of~$\mathcal{C}$ means that~$\mathcal{C}_{[X,Y]} =
[\mathcal{C}_X, \mathcal{C}_Y]$ for all $X$, $Y\in D(M)$. The distribution
on~$\mathcal{E}$ spanned by the fields of the form~$\mathcal{C}_X$ (the Cartan
distribution) is Frobenius integrable. We denote it by~$\mathcal{C}\subset
D(\mathcal{E})$ as well.

A (higher infinitesimal) symmetry of~$\mathcal{E}$ is a $\pi_\infty$-vertical
vector field $S\in D(\mathcal{E})$ such that $[X,\mathcal{C}] \subset
\mathcal{C}$.

Consider the submodule $\Lambda_h^k(\mathcal{E})$ generated by the
forms~$\pi_\infty^*(\theta)$, $\theta\in \Lambda^k(M)$. Elements $\omega \in
\Lambda_h^k(\mathcal{E})$ are called horizontal $k$-forms. Generalizing
slightly the action of the Cartan connection, one can apply it to the de~Rham
differential $d\colon \Lambda^k(M) \to \Lambda^{k+1}(M)$ and obtain the
horizontal de~Rham complex
\begin{equation*}
  \xymatrix{
    0\ar[r]& \mathcal{F}(\mathcal{E})\ar[r]& \dots\ar[r]&
    \Lambda_h^k(\mathcal{E})\ar[r]^-{d_h}& \Lambda_h^{k+1}(\mathcal{E})\ar[r]&
    \dots\ar[r]& \Lambda_h^n(\mathcal{E})\ar[r]&0
  }
\end{equation*}
on~$\mathcal{E}$. Elements of its $(n-1)$st cohomology
group~$H_h^{n-1}(\mathcal{E})$ are called conservation laws
of~$\mathcal{E}$. We always assume~$\mathcal{E}$ to be differentially
connected which means that~$H_h^0(\mathcal{E}) = \mathbb{R}$.

\begin{coord}
  Consider a trivialization of~$\pi$ with local coordinates $x^1,\dots,x^n$
  in~$\mathcal{U}\subset M$ and $u^1,\dots,u^m$ in the fibers
  of~$\eval{\pi}_{\mathcal{U}}$. Then in $\pi_\infty^{-1}(\mathcal{U}) \subset
  J^\infty(\pi)$ the adapted coordinates~$u_\sigma^i$ arise and the Cartan
  connection is determined by the total derivatives
  \begin{equation*}
    \mathcal{C}\colon \frac{\partial}{\partial x^i} \mapsto D_i =
    \frac{\partial}{\partial x^i} + \sum_{j,\sigma}u_{\sigma
      i}^j\frac{\partial}{\partial u_\sigma^j}.
  \end{equation*}
  Let $F = (F^1,\dots,F^r)$, where~$F^j$ are smooth functions
  on~$J^k(\pi)$. The the infinite prolongation of the locus
  \begin{equation*}
    \sd{z\in J^k(\pi)}{F^1(z) = \dots = F^r(z) = 0} \subset J^k(\pi)
  \end{equation*}
  is defined by the system
  \begin{equation*}
    \mathcal{E} = \mathcal{E}_F = \sd{z\in J^\infty(\pi)}{D_\sigma(F^j)(z) =
      0,\ j=1,\dots,r,\ \abs{\sigma}\geq0},
  \end{equation*}
  where~$D_\sigma$ denotes the composition of the total derivatives
  corresponding to the multi-index~$\sigma$. The total derivatives, as well as
  all differential operators in total derivatives, can be restricted to
  infinite prolongations and we preserve the same notation for these
  restrictions. Given an~$\mathcal{E}$, we always choose internal local
  coordinates in it for subsequent computations. To restrict an operator
  to~$\mathcal{E}$ is to express this operator in terms of internal
  coordinates.

  Any symmetry of~$\mathcal{E}$ is an evolutionary vector field
  \begin{equation*}
    \Ev_\phi = \sum D_\sigma(\phi^j)\frac{\partial}{\partial u_\sigma^j}
  \end{equation*}
  (summation on internal coordinates), where the functions
  $\phi^1,\dots,\phi^m \in \mathcal{F}(\mathcal{E})$ satisfy the system
  \begin{equation*}
    \sum_{\sigma,\alpha} \frac{\partial F^j}{\partial
      u_\sigma^\alpha}D_\sigma(\phi^\alpha) = 0,\quad j = 1,\dots,r.
  \end{equation*}
  A horizontal $(n-1)$-form
  \begin{equation*}
    \omega = \sum_i a_i\,dx^1\wedge \dots\wedge\,dx^{i-1}\wedge
    \,dx^{i+1}\wedge \dots\wedge \,dx^n
  \end{equation*}
  defines a conservation law of~$\mathcal{E}$ if
  \begin{equation*}
    \sum_i(-1)^{i+1}D_i(a_i) = 0.
  \end{equation*}
  We are interested in nontrivial conservation laws, i.e., such that~$\omega$
  is not exact.

  Finally, $\mathcal{E}$ is differentially connected if the only solutions of
  the system
  \begin{equation*}
    D_1(f) = \dots = D_n(f) = 0,\quad f \in \mathcal{F}(\mathcal{E}),
  \end{equation*}
  are constants.
\end{coord}

Consider now a locally trivial bundle $\tau\colon \tilde{\mathcal{E}} \to
\mathcal{E}$ such that there exists a flat connection~$\tilde{\mathcal{C}}$ in
$\pi_\infty\circ\tau\colon \tilde{\mathcal{E}} \to
M$. Following~\cite{Trends}, we say that~$\tau$ is a (differential) covering
over~$\mathcal{E}$ if one has
\begin{equation*}
  \tau_*(\tilde{\mathcal{C}}_X) = \mathcal{C}_X
\end{equation*}
for any vector field $X\in D(M)$. Objects existing on~$\tilde{\mathcal{E}}$
are nonlocal for~$\mathcal{E}$: e.g., symmetries of~$\tilde{\mathcal{E}}$ are
nonlocal symmetries of~$\mathcal{E}$, conservation laws
of~$\tilde{\mathcal{E}}$ are nonlocal conservation laws of~$\mathcal{E}$,
etc. A derivation $S\colon \mathcal{F}(\mathcal{E}) \to
\mathcal{F}(\tilde{\mathcal{E}})$ is called a nonlocal shadow if the diagram
\begin{equation*}
  \xymatrix{
    \mathcal{F}(\mathcal{E})\ar[r]^-{\mathcal{C}_X}\ar[d]_-S&
    \mathcal{F}(\mathcal{E})\ar[d]^-S\\ 
    \mathcal{F}(\tilde{\mathcal{E}})\ar[r]^-{\tilde{\mathcal{C}}_X}&
    \mathcal{F}(\tilde{\mathcal{E}}) 
  }
\end{equation*}
is commutative for any~$X\in D(M)$. In particular, any symmetry
of the equation~$\mathcal{E}$, as well as
restrictions~$\eval{\tilde{S}}_{\mathcal{F}(\mathcal{E})}$ of nonlocal
symmetries may be considered as shadows. A nonlocal symmetry is said to be
invisible if its shadow~$\eval{\tilde{S}}_{\mathcal{F}(\mathcal{E})}$ vanishes.

A covering $\tau$ is said to be irreducible if~$\tilde{\mathcal{E}}$ is
differentially connected. Two coverings are equivalent if there exists a
diffeomorphism $g\colon \tilde{\mathcal{E}}_1 \to \tilde{\mathcal{E}}_2$ such
that the diagrams
\begin{equation*}
  \xymatrix{
    \tilde{\mathcal{E}}_1\ar[rr]^g\ar[rd]_-{\tau_1}&&
    \tilde{\mathcal{E}}_2\ar[ld]^-{\tau_2}\\ 
    &\mathcal{E}\rlap{,}&
  }\qquad
  \xymatrix{
    D(\tilde{\mathcal{E}}_1)\ar[rr]^{g_*}&&D(\tilde{\mathcal{E}}_2)\\
    &D(M)\ar[lu]^-{\tilde{\mathcal{C}}_1}\ar[ru]_-{\tilde{\mathcal{C}}_2}&
  }
\end{equation*}
are commutative. Note also that for any two coverings their Whitney product is
naturally defined. A covering is called linear if~$\tau$ is a vector bundle
and the action of vector fields~$\tilde{\mathcal{C}}_X$ preserves the subspace
of fiber-wise linear functions in~$\mathcal{F}(\tilde{\mathcal{E}})$.

In the case of 2D equations, there exists a fundamental relation between
special type of coverings over~$\mathcal{E}$ and conservation laws of the
latter. Let~$\tau$ be a covering of rank~$l<\infty$. We say that~$\tau$ is an
Abelian covering if there exist~$l$ independent conservation
laws~$[\omega_i] \in H_h^1(\mathcal{E})$, $i=1,\dots,l$, such that the
forms~$\tau^*(\omega_i)$ are exact. Then equivalence classes of such coverings
are in one-to-one correspondence with $l$-dimensional $\mathbb{R}$-subspaces
in~$H_h^1(\mathcal{E})$.

\begin{coord}
  Choose a trivialization of the covering~$\tau$ and let $w^1,\dots,w^l,\dots$
  be coordinates in fibers (the are called nonlocal variables). Then the
  covering structure is given by the extended total derivatives
  \begin{equation*}
    \tilde{D}_i = D_i + X_i,\quad i= 1,\dots,n,
  \end{equation*}
  where
  \begin{equation*}
    X_i = \sum_\alpha X_i^\alpha\frac{\partial}{\partial w^\alpha}
  \end{equation*}
  are $\tau$-vertical vector fields (nonlocal tails) enjoying the condition
  \begin{equation}\label{eq:2}
    D_i(X_j) - D_j(X_i) + [X_i,X_j] = 0,\quad i<j.
  \end{equation}
  Here $D_i(X_j)$ denotes the action of $D_i$ on coefficients
  of~$X_j$. Relations~\eqref{eq:2} (flatness of~$\tilde{\mathcal{C}}$) amount
  to the fact that the manifold~$\tilde{\mathcal{E}}$ endowed with the
  distribution~$\tilde{\mathcal{C}}$ coincides with the infinite prolongation
  of the overdetermined system
  \begin{equation*}
    \frac{\partial w^\alpha}{\partial x^i} = X_i^\alpha,
  \end{equation*}
  which is compatible modulo~$\mathcal{E}$.

  Irreducible coverings are those for which the system of vector fields
  $\tilde{D}_1, \dots, \tilde{D}_n$ has no nontrivial
  integrals. If~$\bar{\tau}$ is another covering with the nonlocal
  tails~$\bar{X}_i = \sum\bar{X}_i^\beta\partial/\partial \bar{w}^\beta$, then
  the Whitney product $\tau \oplus \bar{\tau}$ of~$\tau$ and~$\bar{\tau}$ is
  given by
  \begin{equation*}
    \tilde{D}_i = D_i + \sum_\alpha X_i^\alpha\frac{\partial}{\partial
      w^\alpha} + \sum_\beta  \bar{X}_i^\beta\frac{\partial}{\partial
      \bar{w}^\beta}.
  \end{equation*}
  A covering is Abelian if the coefficients~$X_i^\alpha$ are independent of
  nonlocal variables~$w^j$. If $n=2$ and
  $\omega_\alpha = X_1^\alpha\,dx^1 + X_2^\alpha\,dx^2$, $\alpha=1,\dots,l$,
  are conservation laws of~$\mathcal{E}$ then the corresponding Abelian
  covering is given by the system
  \begin{equation*}
    \frac{\partial w^\alpha}{\partial x^i} = X_i^\alpha,\qquad i=1,2,\quad
    \alpha = 1,\dots,l,
  \end{equation*}
  or
  \begin{equation*}
    \tilde{D}_i = D_i + \sum_\alpha X_i^\alpha\frac{\partial}{\partial w^\alpha}.
  \end{equation*}
  Vice versa, is such a covering is given, then one can construct the
  corresponding conservation law.

  The horizontal de~Rham differential on~$\tilde{\mathcal{E}}$ is $\tilde{d}_h
  = \sum_i dx^i\wedge \tilde{D}_i$. A covering is linear if
  \begin{equation}\label{eq:7}
    X_i^\alpha = \sum_\beta X_{i,\beta}^\alpha w^\beta,
  \end{equation}
  where~$X_{i,\beta}^\alpha \in \mathcal{F}(\mathcal{E})$.

  \begin{remark}\label{sec:preliminaries-rem-1}
    Denote by~$\mathbf{X}_i$ the $\mathcal{F}(\mathcal{E})$-valued
    matrix~$(X_{i,\beta}^\alpha)$ that appears in~\eqref{eq:7}. Then
    Eq.~\eqref{eq:2} may be rewritten as
    \begin{equation*}
      D_i(\mathbf{X_j}) - D_j(\mathbf{X_i}) + [\mathbf{X_i,\mathbf{X_j}}] =
      0.
    \end{equation*}
    for linear coverings. Thus, a linear covering defines a zero-curvature
    reperesentation for~$\mathcal{E}$ and vice versa.
  \end{remark}

  A nonlocal symmetry in~$\tau$ is a vector field
  \begin{equation*}
    S_{\phi,\psi} = \sum \tilde{D}_\sigma(\phi^j)\frac{\partial}{\partial
      u_\sigma^j} + \sum \psi^\alpha\frac{\partial}{\partial w^\alpha}, 
  \end{equation*}
  where the vector functions~$\phi = (\phi^1,\dots,\phi^m)$ and $\psi =
  (\psi^1,\dots, \psi^\alpha,\dots)$ on~$\tilde{\mathcal{E}}$ satisfy the
  system of equations
  \begin{align}
    \label{eq:3}
    &\sum\frac{\partial F^j}{\partial u_\sigma^j}\tilde{D}_\sigma(\phi^j) = 0,\\
    \label{eq:4}
    &\tilde{D}_i(\psi^\alpha) = \sum\frac{\partial X_i^\alpha}{\partial
    u_\sigma^j}\tilde{D}_\sigma(\phi^j) + \sum\frac{\partial
    X_i^\alpha}{\partial w^\beta}\psi^\beta.
  \end{align}
  Nonlocal shadows are the derivations
  \begin{equation*}
    \tilde{\Ev}_\phi =  \sum \tilde{D}_\sigma(\phi^j)\frac{\partial}{\partial
      u_\sigma^j},
  \end{equation*}
  where~$\phi$ satisfies Eq.~\eqref{eq:3}, invisible symmetries are
  \begin{equation*}
    S_{\phi,0} = \sum \psi^\alpha\frac{\partial}{\partial w^\alpha},
  \end{equation*}
  where $\psi$ satisfies
  \begin{equation}
    \label{eq:5}
    \tilde{D}_i(\psi^\alpha) = \sum\frac{\partial
      X_i^\alpha}{\partial w^\beta}\psi^\beta.
  \end{equation}
  In what follows, we use the notation $\taui{\tau}\colon
  \taui{\tilde{{\mathcal{E}}}} \to \tilde{\mathcal{E}}$ for the covering
  defined by~Eq.~\eqref{eq:5}.
\end{coord}

\begin{remark}
  Eq.~\eqref{eq:5} defines a linear covering over~$\tilde{\mathcal{E}}$. Due
  to Remark~\ref{sec:preliminaries-rem-1}, we see that for any non-Abelian
  covering we obtain in such a way a nonlocal zero-curvature representation
  with the matrices~$\mathbf{X}_i = (\partial X_i^\alpha/\partial w^\beta)$.
\end{remark}

\begin{remark}
  The covering $\taui{\tau}\colon \taui{\tilde{{\mathcal{E}}}} \to
  \tilde{\mathcal{E}}$ is the vertical part of the tangent covering
  $\mathbf{t}\colon \mathcal{T}\tilde{\mathcal{E}}\to \tilde{\mathcal{E}}$,
  see the definition in~\cite{KVV}.
\end{remark}

\section{The main result}\label{sec:main-result}

From now on we consider two-dimensional scalar equations with the independent
variables~$x$ and~$y$. We shall show that any such an equation that admits an
irreducible covering possesses a (nonlocal) conservation law.

\begin{example}
  \label{sec:main-result-exmpl-1}
  Let us revisit the Bianchi example discussed in the beginning of the
  paper. Equations~\eqref{eq:1} define a one-dimensional non-Abelian
  covering~$\tau\colon \tilde{\mathcal{E}}=\mathcal{E}\times\mathbb{R} \to
  \mathcal{E}$ over the sine-Gordon equation~\eqref{eq:6} with the nonlocal
  variable~$w$. Then the defining equations~\eqref{eq:5} for invisible
  symmetries in this covering are
  \begin{equation*}
    \frac{\partial\psi}{\partial x} = -\cos(u+w)\psi,\quad
    \frac{\partial\psi}{\partial y} = -\cos(u-w)\psi.
  \end{equation*}
  This is a one-dimensional linear covering over~$\tilde{\mathcal{E}}$ which
  is equivalent to the Abelian covering
  \begin{equation*}
    \frac{\partial\bar{\psi}}{\partial x} = -\cos(u+w),\quad
    \frac{\partial\bar{\psi}}{\partial y} = -\cos(u-w),
  \end{equation*}
  where~$\bar{\psi} = \ln\psi$. Thus, we obtain the nonlocal conservation law
  \begin{equation*}
    \omega = -\cos(u+w)\,dx -\cos(u-w)\,dy
  \end{equation*}
  of the sine-Gordon equation.
\end{example}

The next result shows that Bianchi's observation is of a quite general nature.

\begin{proposition}
  \label{sec:main-result-lemma-1}
  Let $\tau\colon \tilde{\mathcal{E}} \to \mathcal{E}$ be a one-dimensional
  non-Abelian covering over~$\mathcal{E}$. Then\textup{,} if $\tau$ is
  irreducible\textup{,} $\taui{\tau}\colon \taui{\tilde{\mathcal{E}}} \to
  \mathcal{E}$ defines a nontrivial conservation law of the
  equation~$\tilde{\mathcal{E}}$ \textup{(}and\textup{,}
  consequently\textup{,} of~$\mathcal{E}$ too\textup{)}.
\end{proposition}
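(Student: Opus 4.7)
The plan is to construct the conservation law explicitly using the one-dimensionality of $\tau$ and then establish non-triviality by contradiction via the structure of invisible symmetries. Choose a trivialization with nonlocal coordinate $w$, so the tails are $X_i = X_i^1\,\partial/\partial w$ for $i=1,2$, and specialize~\eqref{eq:5}: the covering $\taui{\tau}$ becomes the rank-one linear covering over $\tilde{\mathcal{E}}$ defined by
\[
\tilde{D}_i(\psi) = A_i\,\psi,\qquad A_i := \frac{\partial X_i^1}{\partial w},\quad i=1,2.
\]
The substitution $\bar\psi = \ln\psi$ converts this into the equivalent Abelian covering $\tilde{D}_i(\bar\psi) = A_i$, which, as recalled in Section~\ref{sec:preliminaries}, corresponds to the horizontal 1-form $\omega = A_1\,dx + A_2\,dy$ on $\tilde{\mathcal{E}}$. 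Its $\tilde{d}_h$-closedness, $\tilde{D}_1(A_2) = \tilde{D}_2(A_1)$, follows by differentiating the flatness relation~\eqref{eq:2} for $\tau$ with respect to $w$, so $[\omega]\in H_h^1(\tilde{\mathcal{E}})$ is well defined; the non-Abelian hypothesis ensures that $A_1, A_2$ do not vanish identically.

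To show the class $[\omega]$ is nonzero, I argue by contradiction. Suppose $\omega = \tilde{d}_h f$ for some $f\in\mathcal{F}(\tilde{\mathcal{E}})$, that is, $\tilde{D}_i(f) = A_i$ for $i=1,2$. Consider the $\tau$-vertical vector field $V := e^f\,\partial/\partial w$. Using the identity $[\tilde{D}_i,\partial/\partial w] = -A_i\,\partial/\partial w$, a direct computation gives
\[
[\tilde{D}_i, V] = e^f\bigl(\tilde{D}_i(f) - A_i\bigr)\frac{\partial}{\partial w} = 0,
\]
so $V$ is a nowhere-vanishing invisible symmetry of $\tau$.

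I would then exploit $V$ to re-trivialize $\tau$. Being nowhere zero and $\tau$-vertical of the form $h\,\partial/\partial w$, the field $V$ admits a straightening: pick a new fiber coordinate $\tilde w$ with $V(\tilde w) = 1$, so that $V = \partial/\partial\tilde w$ in the new trivialization. The new tails $\tilde X_i^1 := \tilde{D}_i(\tilde w)$ then satisfy
\[
V(\tilde X_i^1) = V\bigl(\tilde{D}_i(\tilde w)\bigr) = \tilde{D}_i\bigl(V(\tilde w)\bigr) = \tilde{D}_i(1) = 0
\]
by the commutativity of $V$ with $\tilde{D}_i$. Hence $\tilde X_i^1$ is independent of $\tilde w$ and lies in $\mathcal{F}(\mathcal{E})$, which means that $\tau$ is Abelian in the new trivialization, contradicting the hypothesis.

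The main difficulty is this final step: translating the formal vanishing of the cohomology class $[\omega]$ into a concrete re-trivialization displaying $\tau$ as Abelian. The irreducibility hypothesis enters precisely here to guarantee $H_h^0(\tilde{\mathcal{E}}) = \mathbb{R}$, so that the primitive $f$ is unique up to an additive real constant, the invisible symmetry $V$ is canonical up to positive scalar, and the new fiber coordinate $\tilde w$ furnishes an unambiguous change of trivialization on $\tilde{\mathcal{E}}$ rather than just on an open piece contaminated by nontrivial $\tilde{D}$-invariants.
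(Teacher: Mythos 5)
Your argument is correct, but it takes a genuinely different route from the paper's. The paper proves the stronger statement that the covering $\taui{\tau}$ is itself irreducible: it supposes a nontrivial common integral $a$ of $D_x^{\mathbf{I}}$, $D_y^{\mathbf{I}}$, expands it as the power series~\eqref{eq:10} in $\psi$, reduces to $a=a(a_1\psi)$, and extracts from this a nontrivial integral of $\tilde{D}_x$, $\tilde{D}_y$, contradicting irreducibility of $\tau$; non-exactness of $\omega$ then follows since a primitive $f$ would produce the integral $\psi\mathrm{e}^{-f}$. You attack non-exactness directly: from a putative primitive $f$ you manufacture the nowhere-vanishing invisible symmetry $V=\mathrm{e}^f\partial/\partial w$ and straighten it to a trivialization in which the tails are independent of the fiber coordinate, contradicting the non-Abelian hypothesis. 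Your route buys two things: it avoids the convergence/analyticity assumption the paper imposes on the series~\eqref{eq:10}, and it only uses the specific shape of the would-be integral $\psi\mathrm{e}^{-f}$ rather than an arbitrary one. What it does not give is the stronger conclusion that $\taui{\tau}$ is irreducible, which is the property exploited later (cf.\ Proposition~\ref{sec:main-result-lemma-2} and the induction in the main theorem); for the proposition as stated, non-exactness of $\omega$ suffices.

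One point to tighten: you locate the role of irreducibility in the uniqueness of $f$ and the globality of the straightening, but the construction is local anyway and neither of these is where the hypothesis really bites. It is needed at the very last step, to conclude that the new tails $\tilde{X}_i\in\mathcal{F}(\mathcal{E})$ define a \emph{nontrivial} class in $H_h^1(\mathcal{E})$ whose pullback $\tau^*(\tilde{X}_1\,dx+\tilde{X}_2\,dy)=\tilde{d}_h\tilde{w}$ is exact --- this is what the paper's invariant definition of an Abelian covering requires. Indeed, if $\tilde{X}_i=D_i(g)$ for some $g\in\mathcal{F}(\mathcal{E})$, then $\tilde{w}-g$ would be a nontrivial integral of $\tilde{D}_x$, $\tilde{D}_y$, which irreducibility of $\tau$ forbids. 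With that one line added, your contradiction with the non-Abelian hypothesis is complete.
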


\begin{proof}
  Consider the total derivatives
  \begin{align*}
    D_x^{\mathbf{I}}&= \tilde{D}_x + \frac{\partial X}{\partial w}
                      \psi\frac{\partial}{\partial \psi} = D_x +
                      X\frac{\partial}{\partial w} +
                      \frac{\partial X}{\partial w} 
                      \psi\frac{\partial}{\partial \psi}
    \\
    D_y^{\mathbf{I}}&=\tilde{D}_y + \frac{\partial Y}{\partial w}
                      \psi\frac{\partial}{\partial \psi} = D_y +
                      Y\frac{\partial}{\partial w} +
                      \frac{\partial Y}{\partial w} 
                      \psi\frac{\partial}{\partial \psi}
  \end{align*}
  on $\taui{\mathcal{E}}$ and assume
  that~$a\in \mathcal{F}(\tilde{\mathcal{E}})$ is a common nontrivial integral
  of these fields:
  \begin{equation}\label{eq:9}
     D_x^{\mathbf{I}}(a) =  D_y^{\mathbf{I}}(a) =0, \quad a\neq\const.
  \end{equation}
  Choose a point in $\taui{\mathcal{E}}$ and assume that the formal series
  \begin{equation}\label{eq:10}
    a_0 + a_1\psi + \dots + a_j\psi^j + \dots,\quad a_j\in
    \mathcal{F}(\tilde{\mathcal{E}}), 
  \end{equation}
  converges to $a$ in a neighborhood of this point. Substituting
  relations~\eqref{eq:10} to~\eqref{eq:9} and equating coefficients at the
  same powers of~$\psi$, we get
  \begin{equation*}
    \tilde{D}_x(a_j) +j\frac{\partial X}{\partial w}a_j =0,\quad
    \tilde{D}_y(a_j) +j\frac{\partial Y}{\partial w}a_j =0,\qquad
    j=0,1,\dots,
  \end{equation*}
  and, since $\tau$ is irreducible, this implies that $a_0 = k_0 = \const$ and
  \begin{equation*}
    \frac{\tilde{D}_x(a_j)}{a_j} = j\frac{\tilde{D}_x(a_1)}{a_1},\quad
    \frac{\tilde{D}_y(a_j)}{a_j} = j\frac{\tilde{D}_y(a_1)}{a_1}.
  \end{equation*}
  Hence, $a_j = k_j(a_1)^j$, $j>0$. Substituting these relations
  to~\eqref{eq:10}, we see that~$a=a(\theta)$, where~$\theta=a_1\psi$, $a_1
  \in \mathcal{F}(\mathcal{E})$. Then Equation~\eqref{eq:9} take the form
  \begin{equation*}
    \dot{a}\psi\left(\tilde{D}_x(a_1) + \frac{\partial X}{\partial
        \psi}\right) 
    =0,\quad \dot{a}\psi\left(\tilde{D}_y(a_1) + \frac{\partial
        Y}{\partial\psi}\right) 
    =0, \qquad \dot{a} = \frac{da}{d\theta}.
  \end{equation*}
  Thus
  \begin{equation*}
    \frac{\partial X}{\partial w} = -\tilde{D}_x(a_1),\quad
    \frac{\partial Y}{\partial w} = -\tilde{D}_y(a_1)
  \end{equation*}
  and the function~$w + a_1$ is a nontrivial integral of~$\tilde{D}_x$
  and~$\tilde{D}_y$. Contradiction.

  Finally, repeating the scheme of Example~\ref{sec:main-result-exmpl-1}, we
  pass to the equivalent covering by setting~$\bar{\psi} = \ln\psi$ and obtain
  the nontrivial conservation law
  \begin{equation*}
    \omega = \frac{\partial X}{\partial w}\,dx + \frac{\partial Y}{\partial
      w}\, dy
  \end{equation*}
  on~$\taui{\mathcal{E}}$.
\end{proof}

Indeed, Bianchi's result has a further generalization. To formulate the
latter, let us say that a covering $\tau\colon \tilde{\mathcal{E}} \to
\mathcal{E}$ is strongly non-Abelian if for any nontrivial conservation
law~$\omega$ of the equation~$\mathcal{E}$ its lift~$\tau^*(\omega)$ to the
manifold~$\tilde{\mathcal{E}}$ is nontrivial as well. Now, a straightforward
generalization of Proposition~\ref{sec:main-result-lemma-1} is

\begin{proposition}
  \label{sec:main-result-lemma-2}
  Let~$\tau\colon \tilde{\mathcal{E}} \to \mathcal{E}$ be an irreducible
  covering over a differentially connected equation. Then~$\tau$ is a strongly
  non-Abelian covering if and only if the covering~$\taui{\tau}$ is
  irreducible.
\end{proposition}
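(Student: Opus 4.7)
The plan is to set up a dictionary between nontrivial integrals of $\taui{\tau}$ and elements of $\ker\bigl(\tau^*\colon H^1_h(\mathcal{E})\to H^1_h(\tilde{\mathcal{E}})\bigr)$, extending the rank-one computation of Proposition~\ref{sec:main-result-lemma-1}.

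For the direction ``$\taui{\tau}$ irreducible $\Rightarrow$ $\tau$ strongly non-Abelian'' I would take the contrapositive. Suppose $[\omega]\in H^1_h(\mathcal{E})$ is nontrivial but $\tau^*[\omega]=0$; write $\omega=P\,dx+Q\,dy$ with $P,Q\in\mathcal{F}(\mathcal{E})$ and pick $f\in\mathcal{F}(\tilde{\mathcal{E}})$ with $\tau^*\omega=\tilde{d}_h f$, so that $\tilde{D}_x f=P$ and $\tilde{D}_y f=Q$. Set $g_\gamma:=\partial f/\partial w^\gamma$. Applying $\partial/\partial w^\gamma$ to the relations $\tilde{D}_i f=P_i$, using the commutator $[\partial/\partial w^\gamma,\tilde{D}_i]=\sum_\alpha(\partial X_i^\alpha/\partial w^\gamma)\,\partial/\partial w^\alpha$ together with $\partial P_i/\partial w^\gamma=0$, the $g_\gamma$ are forced to satisfy the ``dual'' system
\begin{equation*}
  \tilde{D}_i g_\gamma + \sum_\alpha\frac{\partial X_i^\alpha}{\partial w^\gamma}\,g_\alpha = 0.
\end{equation*}
A short Leibniz computation then shows that $a:=\sum_\gamma g_\gamma\psi^\gamma\in\mathcal{F}(\taui{\tilde{\mathcal{E}}})$ satisfies $D_i^{\mathbf{I}}a=0$. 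If $a$ vanished identically, all $g_\gamma$ would be zero, forcing $f\in\mathcal{F}(\mathcal{E})$ and $\omega=d_h f$ exact on $\mathcal{E}$, a contradiction. Hence $\taui{\tau}$ is reducible.

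The reverse direction reverses this construction. Given a nonconstant integral $a$ of $\taui{\tau}$, expand $a=\sum_{k\geq0}a^{(k)}$ by homogeneous degree in the fiber coordinates $\psi^\alpha$. Since $D_i^{\mathbf{I}}$ preserves the grading, each $a^{(k)}$ is separately an integral; the degree-zero part is a $\tilde{D}_i$-invariant on $\tilde{\mathcal{E}}$ and hence constant by irreducibility of $\tau$. The goal is to reduce to a nonzero linear component $a^{(1)}=\sum_\gamma b_\gamma\psi^\gamma$, whose coefficients satisfy the dual system, and then to reconstruct $f\in\mathcal{F}(\tilde{\mathcal{E}})$ with $\partial f/\partial w^\gamma=b_\gamma$. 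Once such $f$ is in hand, the first-half computation read backwards forces $P_i:=\tilde{D}_i f\in\mathcal{F}(\mathcal{E})$, and $\omega:=P_x\,dx+P_y\,dy$ is a closed horizontal 1-form on $\mathcal{E}$. It is nontrivial on $\mathcal{E}$: if $\omega=d_h h$ for some $h\in\mathcal{F}(\mathcal{E})$, then $f-h$ would be a $\tilde{D}_i$-invariant on $\tilde{\mathcal{E}}$, hence constant by irreducibility of $\tau$, forcing $f\in\mathcal{F}(\mathcal{E})$ and all $b_\gamma=0$, contradicting $a^{(1)}\neq0$. Meanwhile $\tau^*\omega=\tilde{d}_h f$ is exact by construction, so $[\omega]\in\ker(\tau^*)\setminus\{0\}$.

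The decisive step, absent from the rank-one Proposition~\ref{sec:main-result-lemma-1}, is the construction of $f$ from $(b_\gamma)$, which requires the fiberwise integrability $\partial b_\gamma/\partial w^\delta=\partial b_\delta/\partial w^\gamma$. Differentiating the dual system in $w^\delta$ produces a closed linear evolution equation for the antisymmetric tensor $c_{\gamma\delta}:=\partial b_\gamma/\partial w^\delta-\partial b_\delta/\partial w^\gamma$. I expect that combining this with the flatness~\eqref{eq:2} and with the absence of nontrivial $\tilde{D}_i$-invariants on $\tilde{\mathcal{E}}$ (the irreducibility of $\tau$) forces $c_{\gamma\delta}\equiv0$, possibly after a gauge modification of $(b_\gamma)$ within the affine family of solutions of the dual system. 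Handling this, together with the reduction to a nonzero linear part $a^{(1)}$, is the principal technical content of the proof.
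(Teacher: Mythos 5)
The paper gives no proof of this proposition at all --- it is merely announced as a ``straightforward generalization'' of Proposition~\ref{sec:main-result-lemma-1} --- so there is no argument of the author's to compare yours against; I can only judge the proposal on its own terms. Your first direction is correct and complete: for $[\omega]=[P\,dx+Q\,dy]$ nontrivial with $\tau^*\omega=\tilde{d}_hf$, the commutator identity $[\partial/\partial w^\gamma,\tilde{D}_i]=\sum_\alpha(\partial X_i^\alpha/\partial w^\gamma)\partial/\partial w^\alpha$ does put $g_\gamma=\partial f/\partial w^\gamma$ into the dual system, $a=\sum_\gamma g_\gamma\psi^\gamma$ is then annihilated by $D_i^{\mathbf{I}}$, and $a\equiv0$ would force $f\in\mathcal{F}(\mathcal{E})$ and hence exactness of $\omega$ on $\mathcal{E}$. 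That half is a clean, finished argument.

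The converse direction, however, contains two genuine gaps, both of which you candidly flag but neither of which you close. First, a nonconstant integral of $\taui{\tau}$ need not have a nonzero linear component. In the rank-one situation of Proposition~\ref{sec:main-result-lemma-1} this is harmless, because the lowest nonvanishing coefficient satisfies $\tilde{D}_i(\ln a_j)=-j\,\partial X_i/\partial w$ and one simply divides by $j$ (takes a $j$-th root); for $\rank\taui{\tau}>1$ a homogeneous degree-$k$ integral is a flat section of the $k$-th symmetric power of the dual bundle, and there is no general way to extract from it a flat section of the dual bundle itself (an orthogonal-type linear system carries an invariant quadratic form while admitting no invariant covector). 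Second, even granting $a^{(1)}=\sum_\gamma b_\gamma\psi^\gamma\neq0$, the integrability condition $\partial b_\gamma/\partial w^\delta=\partial b_\delta/\partial w^\gamma$ needed to produce $f$ does not follow from irreducibility of $\tau$: as your own computation shows, the antisymmetric part $c_{\gamma\delta}$ is a flat section of $\Lambda^2$ of the dual bundle, whereas irreducibility only excludes nonconstant invariant \emph{functions} on $\tilde{\mathcal{E}}$, not nonzero flat sections of associated tensor bundles; nor is it shown that a gauge modification of $(b_\gamma)$ inside the solution space of the dual system can always remove $c$. Until these two points are settled --- or the class of integrals to be considered is restricted --- the implication ``$\taui{\tau}$ reducible $\Rightarrow\tau$ not strongly non-Abelian'' remains unproved. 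For what it is worth, you have located exactly the places where the author's ``straightforward generalization'' is not straightforward.
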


We shall now need the following construction. Let $\tau\colon
\tilde{\mathcal{E}} \to \mathcal{E}$ be a linear covering. Consider the
fiber-wise projectivization $\taup{\tau}\colon \taup{\tilde{\mathcal{E}}} \to
\mathcal{E}$ of the vector bundle~$\tau$. Denote by $\mathbf{p}\colon
\tilde{\mathcal{E}} \to\taup{\mathcal{E}}$ the natural projection. Then,
obviously, the projection $\mathbf{p}_*(\tilde{\mathcal{C}})$ is well defined
and is an $n$-dimensional integrable distribution
on~$\taup{\mathcal{E}}$. Thus, we obtain the following commutative diagram of
coverings
\begin{equation*}
  \xymatrix{
    \tilde{\mathcal{E}}\ar[rr]^-{\mathbf{p}}\ar[rd]_-{\tau}&
    &\taup{\mathcal{E}}\ar[ld]^-{\taup{\tau}}\\ 
    &\mathcal{E}\rlap{,}&
  }
\end{equation*}
where $\rank(\mathbf{p}) = 1$ and $\rank(\taup{\tau})= \rank(\tau) - 1$.

\begin{proposition}
  \label{sec:main-result-lemma-3}
  Let $\tau\colon \tilde{\mathcal{E}} \to \mathcal{E}$ be an irredicible
  covering. Then the covering~$\taup{\tau}$ is irreducible as well.
\end{proposition}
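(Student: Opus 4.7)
The plan is to lift any common integral on $\taup{\tilde{\mathcal{E}}}$ along the surjective submersion $\mathbf{p}$ to a common integral on $\tilde{\mathcal{E}}$, and then invoke the irreducibility of $\tau$ to conclude that the lift, and hence the original function, must be constant.

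By the very construction $\taup{\tilde{\mathcal{C}}}=\mathbf{p}_*(\tilde{\mathcal{C}})$ used in defining $\taup{\tau}$, the vector field $\tilde{D}_i$ on $\tilde{\mathcal{E}}$ and its projection $\mathbf{p}_*(\tilde{D}_i)$ on $\taup{\tilde{\mathcal{E}}}$ are $\mathbf{p}$-related for $i=1,\dots,n$; note that $\tilde{D}_i$ is indeed $\mathbf{p}$-projectable, since the linearity of~$\tau$ makes the flow of~$\tilde{D}_i$ preserve rays in the fibers of~$\tau$. Consequently, for any $f\in\mathcal{F}(\taup{\tilde{\mathcal{E}}})$,
\[
\tilde{D}_i(\mathbf{p}^*f) \;=\; \mathbf{p}^*\bigl(\mathbf{p}_*(\tilde{D}_i)(f)\bigr),\qquad i=1,\dots,n.
\]
If $f$ is a common integral of $\mathbf{p}_*(\tilde{D}_1),\dots,\mathbf{p}_*(\tilde{D}_n)$ on $\taup{\tilde{\mathcal{E}}}$, then $\mathbf{p}^*f$ is a common integral of $\tilde{D}_1,\dots,\tilde{D}_n$ on the open dense subset $\tilde{\mathcal{E}}\setminus 0$ where $\mathbf{p}$ is defined. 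Irreducibility of $\tau$ then forces $\mathbf{p}^*f$ to be constant, and since $\mathbf{p}$ is surjective onto $\taup{\tilde{\mathcal{E}}}$, $f$ itself must be constant, showing that $\taup{\tau}$ is irreducible.

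The step requiring genuine attention is the passage from \textit{$\mathbf{p}^*f$ is a common integral on $\tilde{\mathcal{E}}\setminus 0$} to \textit{$\mathbf{p}^*f=\const$}: differential connectedness is formulated for first integrals globally defined on $\tilde{\mathcal{E}}$, while $\mathbf{p}^*f$ lives only off the zero section of $\tau$. This is a mild point because $\tilde{\mathcal{E}}\setminus 0$ is Cartan-invariant (again by linearity of~$\tau$), so generic Cartan leaves remain dense in this open dense subset and irreducibility of~$\tau$ restricts to it. I do not anticipate any further obstacle; once this verification is made, the remainder of the proof is the one-line diagram chase using the intertwining identity displayed above.
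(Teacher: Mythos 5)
The paper states this proposition without proof: what follows it in the source is only the coordinate description of $\taup{\tau}$ in an affine chart, and the projectability of $\tilde{\mathcal{C}}$ is dismissed with ``obviously''. So there is no argument of the author's to compare yours against, and your proposal supplies the missing one. Its core is sound and is surely what the author had in mind: linearity of $\tau$ makes the $\tilde{D}_i$ projectable along $\mathbf{p}$, the intertwining identity $\tilde{D}_i(\mathbf{p}^*f)=\mathbf{p}^*\bigl(\mathbf{p}_*(\tilde{D}_i)(f)\bigr)$ holds, and a nonconstant common integral $f$ of $\taup{\tau}$ pulls back to a nonconstant common integral of $\tau$ defined off the zero section.

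The point you flag at the end is, however, the real crux, and the justification you sketch for it does not work as stated. Differential connectedness does \emph{not} restrict to open dense invariant subsets, and it is not equivalent to density of generic leaves: for the radial field $w^1\partial_{w^1}+w^2\partial_{w^2}$ on $\mathbb{R}^2$ the only continuous globally defined first integrals are constants, yet on the invariant open dense subset $\mathbb{R}^2\setminus\{0\}$ the degree-zero function $w^1/w^2$ is a nonconstant integral where defined --- and this is exactly the geometry of $\mathbf{p}^*f$, which is fiberwise homogeneous of degree zero and hence never extends across the zero section unless $f$ is fiberwise constant. Indeed, under a strictly global reading of ``no nontrivial integrals'' the proposition itself is in doubt: taking a closed non-exact horizontal form $\lambda\,dx+\mu\,dy$ on $\mathcal{E}$ and the diagonal rank-two covering $w^\alpha_x=\lambda w^\alpha$, $w^\alpha_y=\mu w^\alpha$, a fiberwise Taylor-coefficient argument along the zero section shows that any global integral is flat there (a nonzero coefficient would yield a primitive of the form), while the projectivization is $\psi_x=\psi_y=0$, i.e.\ maximally reducible. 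The statement, and your proof, are correct under the convention the paper in fact uses throughout --- integrals are considered locally, near a generic point of $\tilde{\mathcal{E}}$; generic points lie off the zero section, so the restriction issue becomes vacuous and your pullback argument closes at once. You should therefore replace the ``dense leaves'' remark by an explicit appeal to this local/generic-point convention (or, equivalently, strengthen the hypothesis to irreducibility of $\tau$ over the complement of the zero section); as written, that sentence is the one step of your proof that is not merely unverified but false in general.
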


\begin{coord}
  Let $\rank(\tau) = l>1$ and
  \begin{equation}\label{eq:8}
    w_{x^i}^\alpha = \sum_{\beta=1}^l X_{i,\beta}^\alpha w^\beta,\qquad i =
    1,\dots,n,\quad \alpha = 1,\dots,l,
  \end{equation}
  be the defining equations of the covering~$\tau$, see
  Eq.~\eqref{eq:7}. Choose an affine chart in the fibers of~$\taup{\tau}$. To
  this end, assume for example that~$w^l\neq0$ and set
  \begin{equation*}
    \bar{w}^\alpha = \frac{w^\alpha}{w^l},\qquad l=1,\dots,l-1,
  \end{equation*}
  in the domain under consideration. Then from Equations~\eqref{eq:8} it
  follows that the system
  \begin{equation*}
    \bar{w}_{x^i}^\alpha = X_{i,l}^\alpha - X_{i,l}^l\bar{w}^\alpha +
    \sum_{\beta=1}^{l-1} X_{i,\beta}^\alpha\bar{w}^\beta - \bar{w}^\alpha
    \sum_{\beta = 1}^{l-1}X_{i,\beta}^l\bar{w}^\beta,\qquad i=1,\dots,n,\quad
    \alpha =1,\dots,l-1.
  \end{equation*}
  locally provides the defining equation for the covering~$\taup{\tau}$.
\end{coord}

We are now ready to state and prove the main result.

\begin{theorem}
  \label{sec:main-result-thm-1}
  Assume that a differentially connected equation~$\mathcal{E}$ admits a
  nontrivial covering $\tau\colon \tilde{\mathcal{E}} \to \mathcal{E}$. Then
  it possesses at least one nontrivial \textup{(}nonlocal\textup{)}
  conservation law.
\end{theorem}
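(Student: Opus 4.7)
The plan is to derive Theorem~\ref{sec:main-result-thm-1} from Propositions~\ref{sec:main-result-lemma-1}--\ref{sec:main-result-lemma-3} by induction on the rank $l=\rank\tau$, after a preliminary reduction to the case in which $\tau$ is irreducible. The branching of the argument is controlled by the strongly non-Abelian property introduced just before Proposition~\ref{sec:main-result-lemma-2}.

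If $\tau$ is not strongly non-Abelian, then by the very definition there is a nontrivial $\omega\in H_h^1(\mathcal{E})$ whose lift $\tau^*(\omega)$ is exact, and $\omega$ itself is the desired (in this case honestly local) conservation law of $\mathcal{E}$. This branch, in particular, takes care of every Abelian $\tau$, since the defining forms of an Abelian covering lift to exact forms on $\tilde{\mathcal{E}}$. From here on I may therefore assume that $\tau$ is irreducible and strongly non-Abelian; such a $\tau$ is automatically non-Abelian.

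For the base case $l=1$ the hypotheses of Proposition~\ref{sec:main-result-lemma-1} are met and that proposition produces a nontrivial conservation law on $\tilde{\mathcal{E}}$, which by definition is a nonlocal conservation law of $\mathcal{E}$. For $l>1$, Proposition~\ref{sec:main-result-lemma-2} yields that $\taui{\tau}\colon\taui{\tilde{\mathcal{E}}}\to\tilde{\mathcal{E}}$ is an irreducible linear covering of rank $l$ over the differentially connected equation $\tilde{\mathcal{E}}$; Proposition~\ref{sec:main-result-lemma-3} then delivers the projectivization $\taup{(\taui{\tau})}$, an irreducible covering of rank $l-1$ over the same base. Applying the inductive hypothesis to this smaller covering produces a nontrivial conservation law on $\tilde{\mathcal{E}}$, which is automatically a nonlocal conservation law of $\mathcal{E}$.

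The step I expect to be most delicate is the preliminary reduction to irreducible $\tau$: for a reducible $\tau$, one must select a maximal set of common integrals of $\tilde{D}_x,\tilde{D}_y$ in $\mathcal{F}(\tilde{\mathcal{E}})$, restrict $\tau$ to a generic common level set, and verify that the restriction is still a covering of $\mathcal{E}$ of positive rank whose Cartan distribution is the one inherited from $\tilde{\mathcal{C}}$. Once this reduction is in place, no further obstacle appears: the differential connectedness of $\mathcal{E}$ propagates to each intermediate $\tilde{\mathcal{E}}$ via the irreducibility preserved at every step, and the nontriviality of the produced conservation law is guaranteed by the preceding propositions throughout the induction.
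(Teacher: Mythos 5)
Your proposal is correct and follows essentially the same route as the paper: reduce to an irreducible covering by restricting to a generic leaf, use the strongly non-Abelian dichotomy from Proposition~\ref{sec:main-result-lemma-2} together with Proposition~\ref{sec:main-result-lemma-3} to drop the rank by one while keeping irreducibility, and finish with Proposition~\ref{sec:main-result-lemma-1} at rank one; the paper phrases this as an explicit iterated tower of coverings rather than an induction on rank, but the content is identical. The delicate point you flag (restriction of a reducible covering to a generic integral leaf, using differential connectedness of $\mathcal{E}$ to see that the leaves still cover $\mathcal{E}$) is exactly how the paper handles the reduction.
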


\begin{proof}
  Actually, the proof is a description of a procedure that allows one to
  construct the desired conservation law.

  Note first that we may assume the covering~$\tau$ to be irreducible. Indeed,
  otherwise the space~$\tilde{\mathcal{E}}$ is foliated by maximal integral
  manifolds of the distribution~$\tilde{\mathcal{C}}$. Let~$l_0$ denote the
  codimension of the generic leaf and~$l=\rank(\tau)$. Then
  \begin{itemize}
  \item $l>l_0$, because~$\tau$ is a nontrivial covering;
  \item the integral leaves project to~$\mathcal{E}$ surjectively,
    because~$\mathcal{E}$ is a differentially connected equation.
  \end{itemize}
  This means that in vicinity of a generic point we can consider~$\tau$ as an
  $l_0$-parametric family of irreducible coverings whose rank is~$r =
  l-l_0>0$. Let us choose one of them and denote it by~$\tau_0\colon
  \mathcal{E}_0 \to \mathcal{E}$.

  If~$\tau_0$ is not strongly non-Abelian, then this would mean
  that~$\mathcal{E}$ possesses at least one nontrivial conservation law and we
  have nothing to prove further. Assume now that the covering~$\tau_0$ is
  strongly non-Abelian. Then due to Proposition~\ref{sec:main-result-lemma-2}
  the linear covering~$\taui{\tau}_0$ is irreducible and by
  Proposition~\ref{sec:main-result-lemma-3} its projectivization~$\tau_1 =
  \taup{(\taui{\tau}_0)}$ possesses the same property and~$\rank(\tau_1) =
  r-1$. Repeating the construction, we arrive to the diagram
  \begin{equation*}\xymatrixcolsep{4.5pc}
    \xymatrix{
      &&\ar[d]^-{\mathbf{p}}
      \ar[dl]_-{\taui{\tau}_0}\taui{\mathcal{E}}_0&
      \dots&
      \ar[d]^-{\mathbf{p}}
      \ar[dl]_-{\taui{\tau}_{r-2}}\taui{\mathcal{E}}_{r-2}\\    
      \mathcal{E}&\ar[l]_-{\tau_0}\mathcal{E}_0&
      \ar[l]^-{\tau_1=\taup{(\taui{\tau}_0)}}
      \taup{(\taui{\mathcal{E}}_0)}=\mathcal{E}_1&  
      \ar[l]\dots&\ar[l]^-{\tau_{r-1}=\taup{(\taui{\tau}_{r-2})}}
      \taup{(\taui{\mathcal{E}}_{r-2})} 
      =\mathcal{E}_{r-1}\rlap{,} 
    }
  \end{equation*}
  where~$\rank(\tau_i) = l-i$. Thus, in ~$r-1$ steps at most we shall arrive
  to a one-dimensional irreducible covering and find ourselves in the
  situation of Proposition~\ref{sec:main-result-exmpl-1} and this finishes the
  proof.
\end{proof}

\section{Examples}
\label{sec:examples}
Let us discuss several illustrative examples.

\begin{example}
  \label{sec:main-result-exmpl-2}
  Consider the Korteweg-de~Vries equation in the form
  \begin{equation}\label{eq:11}
    u_t = uu_x + u_{xxx}
  \end{equation}
  and the well known Miura transformation~\cite{Miura}
  \begin{equation*}
    u = w_x - \frac{1}{6}w^2.
  \end{equation*}
  The last formula is a part of the defining equations for the non-Abelian
  covering
  \begin{equation*}
    \begin{array}{rcl}
      w_x&=&u + \dfrac{1}{6}w^2,\\
      w_t&=&u_{xx} + \dfrac{1}{3}wu_x + \dfrac{1}{3}u^2 + \dfrac{1}{18}w^2u,
    \end{array}
  \end{equation*}
  the covering equation being
  \begin{equation*}
    w_t = w_{xxx} - \frac{1}{6}w^2w_x,
  \end{equation*}
  i.e., the modified KdV equation. Then the corresponding
  covering~$\taui{\tau}$ is defined by the system
  \begin{equation*}
    \begin{array}{rcl}
      \psi_x&=&\dfrac{1}{3}w\psi,\\[10pt]
      \psi_t&=&\dfrac{1}{3}\left(u_x + \dfrac{1}{3}wu\right)\psi
    \end{array}
  \end{equation*}
  that, after relabeling~$\psi\mapsto 3\ln \psi$ gives us the nonlocal
  conservation law
  \begin{equation*}
    \omega = w\,dx + \left(u_x + \dfrac{1}{3}wu\right)\,dt
  \end{equation*}
  of the KdV equation.
\end{example}

\begin{example}
  \label{sec:main-result-exmpl-3}
  The well known Lax pair, see~\cite{Lax}, for the KdV equation may be
  rewritten in terms of zero-curvature representation
  \begin{equation*}
    D_x(\mathbf{T}) - D_t(\mathbf{X}) + [\mathbf{X},\mathbf{T}] =0,
  \end{equation*}
  where $(2\times2)$ matrices $\mathbf{X}$ and~$\mathbf{T}$ are of the form
  \begin{equation*}
    \mathbf{X} =
    \begin{pmatrix}
      0&\frac{1}{6}\\[3pt]
      -(\lambda+u)&0
    \end{pmatrix},\qquad
    \mathbf{T} =
    \begin{pmatrix}
      -\frac{1}{6}u_x&\frac{1}{9}\left(\frac{1}{2}u-\lambda\right)\\[3pt]
      -u_{xx}-\frac{1}{3}u^2-\frac{1}{3}\lambda u-\frac{2}{3}\lambda^2&
      \frac{1}{6}u_x
    \end{pmatrix},
  \end{equation*}
  $\lambda\in\mathbb{R}$ being a real parameter. As it follows from
  Remark~\ref{sec:preliminaries-rem-1}, this amounts to existence of the
  two-dimensional linear covering~$\tau$ given by the system
  \begin{align*}
    w_x^1&=\frac{1}{6}w^2,\\
    w_t^1&=-\frac{1}{6}u_xw^1 + \frac{1}{9}\left(\frac{1}{2}-
      \lambda\right)w^2,\\ 
    w_x^2&=-(\lambda+u)w^1,\\
    w_t^2&=-\left(u_{xx} + \frac{1}{3}u^2 + \frac{1}{3}\lambda u +
      \frac{2}{3}\lambda^2\right)w^1 + \frac{1}{6}u_xw^2. 
  \end{align*}
  Let us choose for the affine chart the domain~$w^2\neq0$ and
  set~$\psi=w^1/w^2$. Then the covering~$\taup{\tau}$ is described by the
  system
  \begin{align*}
    \psi_x&=(\lambda+u)\psi + \frac{1}{6},\\
    \psi_t&=\left(u_{xx} + \frac{1}{3}u^2 + \frac{1}{3}\lambda u
      +\frac{2}{3}\lambda^2\right)\psi^2 - \frac{1}{3}u_x\psi +
    \frac{1}{9}\left(\frac{1}{2} - \lambda\right)
  \end{align*}
  while~$\tau_1 = \taui{(\taup{\tau})}$ is given by
  \begin{align*}
    \tilde{\psi}_x&=(\lambda+u)\tilde{\psi},\\
    \tilde{\psi}_t&=2\left(u_{xx} + \frac{1}{3}u^2 + \frac{1}{3}\lambda u
      +\frac{2}{3}\lambda^2\right)\psi\tilde{\psi} -
    \frac{1}{3}u_x\tilde{\psi}.
  \end{align*}
  Thus, we obtain the conservation law
  \begin{equation*}
    \omega = (\lambda+u)\,dx + \left(2\left(u_{xx} + \frac{1}{3}u^2 +
        \frac{1}{3}\lambda u 
        +\frac{2}{3}\lambda^2\right)\psi -
      \frac{1}{3}u_x\right)\,dt
  \end{equation*}
  that depends on the nonlocal variable~$\psi$.
\end{example}

\begin{example}
  \label{sec:examples-1}
  Consider the potential KdV equation in the form
  \begin{equation*}
    u_t = 3u_x^2 + u_{xxx}
  \end{equation*}
  Its B\"{a}cklund auto-transformation is associated to the covering~$\tau$
  \begin{align*}
    w_x &= \lambda - u_x -\frac{1}{2}(w-u)^2,\\
    w_t &= 2\lambda^2 - 2\lambda u_x - u_x^2 - u_{xxx} + 2u_{xx}(w-u) -
    (\lambda + u_x)(w-u)^2,
  \end{align*}
  where~$\lambda\in\mathbb{R}$, see~\cite{W-E}. Then the
  covering~$\taui{\tau}$ is
  \begin{align*}
    \psi_x &= -(w-u)\psi,\\
    \psi_t &= 2\big(u_{xx}\psi - (\lambda + u_x)(w-u)\big)\psi,
  \end{align*}
  which leads to the nonlocal conservation law
  \begin{equation*}
    \omega  = -(w-u)\,dx + 2\big(u_{xx}\psi - (\lambda + u_x)(w-u)\big)\,dt
  \end{equation*}
  of the potential KdV equation.
\end{example}

\begin{example}
  \label{sec:examples-2}
  The Gauss-Mainardi-Codazzi equations read
  \begin{equation}\label{eq:12}
    u_{xy}  = \frac{g - fh}{\sin u},\qquad
    f_y = g_x + \frac{h - g\cos u}{\sin u}u_x,\qquad
    g_y = h_x - \frac{f - g\cos u}{\sin u}u_y,
  \end{equation}
  see~\cite{Sym}. This is an under-determined system, and imposing additional
  conditions on the unknown functions~$u$, $f$, $g$, and $h$ one obtains
  equations that describe various types of surfaces in~$\mathbb{R}^2$,
  cf.~\cite{K-M-Gauss}. System~\eqref{eq:12} always admits the following
  $\mathbb{C}$-valued zero-curvature representation
  \begin{equation*}
    D_x(\mathbf{Y}) - D_y(\mathbf{X}) +[\mathbf{X},\mathbf{Y}] = 0
  \end{equation*}
  with the matrices
  \begin{equation*}
    \mathbf{X} = \frac{\mathrm{i}}{2}
    \begin{pmatrix}
      u_x & \dfrac{\mathrm{e}^{\mathrm{i}u}f -g}{\sin u}\\
      \dfrac{\mathrm{e}^{-\mathrm{i}u}f -g}{\sin u} & -u_x
    \end{pmatrix},\qquad
    \mathbf{Y} = \frac{\mathrm{i}}{2}
    \begin{pmatrix}
      0 & \dfrac{\mathrm{e}^{iu}g - h}{\sin u} \\
      \dfrac{\mathrm{e}^{-\mathrm{i}u}g - h}{\sin u} & 0
    \end{pmatrix}
  \end{equation*}
  The corresponding two-dimensional linear covering~$\tau$ is defined by the
  system
  \begin{equation*}
    \begin{array}{rcl}
      w_x^1 &=& u_xw^1 + \dfrac{\mathrm{e}^{\mathrm{i}u}f -g}{\sin u}w^2,\\[10pt]
      w_y^1 &=& \dfrac{\mathrm{e}^{iu}g - h}{\sin u}w^2,
    \end{array}\qquad
    \begin{array}{rcl}
      w_x^2&=&\dfrac{\mathrm{e}^{-\mathrm{i}u}f -g}{\sin u}w^1 -u_xw^2,\\[10pt]
      w_y^2&=&\dfrac{\mathrm{e}^{-\mathrm{i}u}g - h}{\sin u}w^1.
  \end{array}
  \end{equation*}
  Hence, the covering~$\taup{\tau}$ in the domain $w^2\neq 0$ is
  \begin{equation*}
    \psi_x = \frac{\mathrm{e}^{\mathrm{i}u}f - g}{\sin u} + 2u_x\psi -
    \frac{\mathrm{e}^{-\mathrm{i}u}f - g}{\sin u}\psi^2,\qquad
    \psi_y = \frac{\mathrm{e}^{\mathrm{i}u}g - h}{\sin u} -
    \frac{\mathrm{e}^{-\mathrm{i}u}g - h}{\sin u}\psi^2.
  \end{equation*}
  Thus, the covering~$\taui{(\taup{\tau})}$, given by
  \begin{equation*}
    \tilde{\psi}_x = 2\left(u_x -
      \frac{\mathrm{e}^{-\mathrm{i}u}f - g}{\sin u}\psi\right)\tilde{\psi},\qquad
    \tilde{\psi}_y =  - 2\frac{\mathrm{e}^{-\mathrm{i}u}g - h}{\sin
      u}\psi\tilde{\psi}, 
  \end{equation*}
  defines the nonlocal conservation law
  \begin{equation*}
    \omega = \left(u_x -
      \frac{\mathrm{e}^{-\mathrm{i}u}f - g}{\sin u}\psi\right)\,dx
    - \frac{\mathrm{e}^{-\mathrm{i}u}g - h}{\sin
      u}\psi\,dy
  \end{equation*}
  of the Gauss-Mainardi-Codazzi equations.
\end{example}

\begin{example}
  \label{sec:examples-3}
  The last example shows that the above described techniques fail for
  infinite-dimensional coverings (such coverings are typical for equations of
  dimension greater than two).

  Consider the equation
  \begin{equation*}
    u_{yy} = u_{tx} + u_y u_{xx} - u_x u_{xy} 
  \end{equation*}
  that arises in the theory of integrable hydrodynamical chains,
  see~\cite{Pavlov}. This equation admits the covering~$\tau$ with the
  nonlocal variables~$w^i$, $i=0,1,\dots$, that enjoy the defining relations
  \begin{align*}
    &w_t^0 + u_y w_x^1 = 0,\quad w_y^0 + u_x w_x^1 = 0,\\
    &w_x^i = w^{i+1},\qquad i\geq 0,\\
    &w_t^i + D_x^i(u_y w_x^1) = 0,\quad w_y^i + D_x^i(u_x w_x^1) = 0,\qquad
    i\geq1. 
  \end{align*}
  see~\cite{Comparative}. This is a linear covering, but its projectivization
  does not lead to construction of conservation laws.
\end{example}

\section*{Discussion}
\label{Discussion}

We described a procedure that allows one to associate, in an algorithmic way,
with any nontrivial finite-dimensional covering over a differentially
connected equation a nonlocal conservation law. Nevertheless, this method
fails in the case of infinite-dimensional coverings. It is unclear, at the
moment at least, whether this is an immanent property of such coverings or a
disadvantage of the method. I hope to clarify this in future research.

\section*{Acknowledgments}
\label{sec:acknowledgements}

I am grateful to Michal Marvan, who attracted my attention to the paper by
Luigi Bianchi~\cite{Bianchi}, and to Raffaele Vitolo, who helped me with
Italian. I am also grateful to Valentin Lychagin for a fruitful discussion.

\end{document}